\pdfoutput=1
\documentclass[12pt]{article}
\usepackage{amsmath,amssymb,amsfonts,here}
\usepackage{tikz}
\setlength{\textheight}{45\baselineskip}
\setlength{\textwidth}{15cm}
\setlength{\voffset}{-3\baselineskip}
\setlength{\oddsidemargin}{21pt}
\setlength{\evensidemargin}{21pt}
\begin{document}
\newtheorem{theorem}{Theorem}[section]
\newtheorem{corollary}[theorem]{Corollary}
\newtheorem{lemma}[theorem]{Lemma}
\newtheorem{remark}[theorem]{Remark}
\newtheorem{example}[theorem]{Example}
\newtheorem{proposition}[theorem]{Proposition}
\newtheorem{definition}[theorem]{Definition}
\newtheorem{assumption}[theorem]{Assumption}
\def\emptyset{\varnothing}
\def\setminus{\smallsetminus}
\def\id{{\mathrm{id}}}
\def\G{{\mathcal{G}}}
\def\E{{\mathcal{E}}}
\def\H{{\mathcal{H}}}
\def\C{{\mathbb{C}}}
\def\N{{\mathbb{N}}}
\def\Q{{\mathbb{Q}}}
\def\R{{\mathbb{R}}}
\def\Z{{\mathbb{Z}}}
\def\Path{{\mathrm{Path}}}
\def\Str{{\mathrm{Str}}}
\def\st{{\mathrm{st}}}
\def\tr{{\mathrm{tr}}}
\def\opp{{\mathrm{opp}}}
\def\a{{\alpha}}
\def\be{{\beta}}
\def\de{{\delta}}
\def\e{{\varepsilon}}
\def\si{{\sigma}}
\def\la{{\lambda}}
\def\th{{\theta}}
\def\lan{{\langle}}
\def\ran{{\rangle}}
\def\isom{{\cong}}
\newcommand{\Hom}{\mathop{\mathrm{Hom}}\nolimits}
\newcommand{\End}{\mathop{\mathrm{End}}\nolimits}
\def\qed{{\unskip\nobreak\hfil\penalty50
\hskip2em\hbox{}\nobreak\hfil$\square$
\parfillskip=0pt \finalhyphendemerits=0\par}\medskip}
\def\proof{\trivlist \item[\hskip \labelsep{\bf Proof.\ }]}
\def\endproof{\null\hfill\qed\endtrivlist\noindent}

\title{The zipper condition for $4$-tensors\\
in two-dimensional topological order\\
and the higher relative commutants of a subfactor\\
arising from a commuting square}
\author{
{\sc Yasuyuki Kawahigashi}\\
{\small Graduate School of Mathematical Sciences}\\
{\small The University of Tokyo, Komaba, Tokyo, 153-8914, Japan}\\
{\small e-mail: {\tt yasuyuki@ms.u-tokyo.ac.jp}}
\\[0,40cm]
{\small iTHEMS Research Group, RIKEN}\\
{\small 2-1 Hirosawa, Wako, Saitama 351-0198,Japan}
\\[0,05cm]
{\small and}
{\small Kavli IPMU (WPI), the University of Tokyo}\\
{\small 5--1--5 Kashiwanoha, Kashiwa, 277-8583, Japan}}
\maketitle{}
\centerline{\sl Dedicated to the memory of Hikosaburo Komatsu}

\begin{abstract}
Researchers in condensed matter physics recently 
study two-dimensional topological order in terms of
tensor networks involving certain $3$- and $4$-tensors.  
Their $3$-tensors satisfying the ``zipper condition'' 
play an important role there and such $3$-tensors can
be made into certain $2$-tensors by combining two wires
into one.  We
identify their $4$-tensors with bi-unitary connections in 
Jones' subfactor theory in operator algebras with precise
normalization constants.  
Then we prove that their $2$-tensors satisfying the zipper condition
are the same as flat fields of strings in subfactor
theory which correspond to elements in the higher relative
commutants of the subfactor arising from the bi-unitary
connection.  This is what we expect, since the zipper
condition is a kind of pentagon relations, but we
clarify what conditions are exactly needed for this ---
we do not need the flatness or the finite depth condition
for the bi-unitary connection.
We actually generalize their $4$-tensors so that the four
index sets of the $4$-tensors can be all different and
work on a ``half-version'' of the zipper condition.
\end{abstract}

\section{Introduction}

\textit{Fusion categories} \cite{EGNO}  (with or without braiding)
have emerged as new types of symmetries in mathematics
and physics.  Both quantum field theory and condensed
matter physics have seen such symmetries recently and
they are often called \textit{non-invertible
symmetries} in the physics literature, and also called
\text{quantum symmetries} in various mathematics literature.
Particularly, many researchers 
in \textit{two-dimensional topological order} in condensed
matter physics are interested in such studies 
using \textit{tensor networks} recently
as in \cite{BMWSHV}, \cite{MRGSCP}, \cite{RGMP}.

It has been well-known that subfactor theory of Jones
\cite{J1}, \cite{J2} in operator algebras gives appropriate
tools to study structures of fusion
categories, and it is indeed this theory which led to 
the discovery of the \textit{Jones polynomial} of knots
and links, the
first mathematical realization of quantum symmetry.
This approach is closely related to
operator algebraic studies of quantum field theory
\cite{L1}, \cite{L2}, \cite{EK2}, since early days
of subfactor theory through Doplicher-Haag-Roberts
theory in \textit{algebraic quantum field theory}.

In a usual operator algebraic study of fusion categories,
we realize an object of a fusion
category as a bimodule over (type II$_1$)
factors or an endomorphism of a (type III) factor
\cite{EK1}, \cite{EK2}.  Another approach \cite{AH} based on 
bi-unitary connections \cite{O}, \cite{Sc}, \cite{K1} is less
frequently studied, but contains the same information as these
two methods and has an advantage that everything is
\textit{finite dimensional}.  This finite dimensionality
enabled us to construct the Haagerup subfactor \cite{AH},
which is still one of the most mysterious quantum
symmetries today.  Recall that a bi-unitary
connection gives a characterization \cite{Sc},
\cite{EK1} of a non-degenerate \textit{commuting square}, which
was initially studied in \cite{P1} in a different context
and has complete information to recover an amenable 
subfactor of type II$_1$ \cite{P2}.

It has been pointed out
in \cite{K2}, \cite{K4} that the $4$-tensors in 
\cite{BMWSHV} are mathematically the same as
bi-unitary connections up to slight change of
normalization, and identification of
some natural finite dimensional Hilbert spaces in condensed matter 
physics and subfactor theory has been given in \cite{K3}.
A characterization of such $4$-tensors as certain
generalized quantum $6j$-symbols has been also given
in \cite{K5}.  
This shows that \textit{anyons} \cite{Ko}
are studied with such $4$-tensor networks \cite{H}.
Our correspondence among various mathematical
approaches to study fusion categories are
summarized in Tables \ref{corresp} and \ref{corresp2}.

\begin{table}[H]
\centering
\caption{Correspondence among endomorphisms, 
bimodules and connections
\label{corresp}}
\begin{tabular}{|c|c|c|}
\hline
endomorphism & bimodules & connections \\ \hline \hline
identity & identity bimodule & trivial connection \\ \hline
direct sum & direct sum & direct sum \\ \hline
composition &  relative tensor product & composition \\ \hline
conjugate endomorphism & dual bimodule & dual connection \\ \hline
dimension & $(\text{Jones\ index})^{1/2}$ &
Perron-Frobenius eigenvalue \\ \hline
intertwiner & intertwiner & flat field of strings \\ \hline
\end{tabular}
\end{table}

\begin{table}[H]
\centering
\caption{Correspondence between connections, commuting 
squares and $4$-tensors
\label{corresp2}}
\begin{tabular}{|c|c|c|}
\hline
connections & commuting square & $4$-tensor \\ \hline \hline
trivial connection & commuting square & trivial $4$-tensor \\ \hline
direct sum & direct sum& direct sum \\\hline
composition & composition & concatenation \\ \hline
dual connection & basic construction & complex conjugate tensor \\ \hline
Perron-Frobenius eigenvalue & 
$(\text{Pimsner-Popa\ index})^{1/2}$ & Perron-Frobenius eigenvalue \\ \hline
flat fields of strings & relative commutant &
tensors with the zipper condition \\ \hline
\end{tabular}
\end{table}

The aim of this paper is threefold and to complete the
above Tables as follows in our main Theorem \ref{zipper}.

(1) Give precise normalization constants in various formulas.

(2) Characterize the morphism property in terms of the ``if and
only if'' form in comparison to the open string bimodule
framework and the zipper condition \cite[(2)]{BMWSHV}.

(3) Give the most general setting of assumptions under which
our arguments work.

The first one is only a technical issue, but important for
actual computations.  Examples of concrete computations
in various physics papers arise from $3$-cocycles on finite groups,
where all the normalizing constants are $1$ and this issue
can be ignored, but we need them in a more general setting.

The second is important from a theoretical viewpoint.  The
bimodule approach involves infinite dimensional operator
algebras and Hilbert spaces, so it is not clear whether
this approach gives the same morphisms as in the tensor
network framework.

For the third aim, we may have four different index sets
for our $4$-tensors, as long as we have bi-unitarity
as in Fig.~\ref{biuni1} and \ref{biuni2}, and we do not need
the finite depth condition or the flatness condition \cite{EK1}
for bi-unitary connections. The lack of the former condition
means that the initial data can produce 
countably many irreducible objects in our tensor category,
and the lack of the latter condition means that our bi-unitary connection,
a kind of \textit{quantum $6j$-symbols}, does not have to be
in a canonical form.

\section{Preliminaries on bi-unitary connections}

We prepare notations and conventions on bi-unitary connections
as in \cite{AH},
\cite[Chapter 11]{EK1}, \cite{K1}, \cite{O}, \cite{Sc}.
We give complete definitions so that researches in other
fields can understand the setting, since
ours in this paper is slightly more general than the one
in \cite{K2}, \cite{K3}.

We have four finite bipartite oriented
graphs $\G_0,\G_1,\G_2,\G_3$.  We assume that 
$\G_0$ and $\G_2$ are connected.    Each graph can have
multiple edges between one pair of vertices and also cycles,
but cannot have a loop, an edge from one vertex to the same one,
since it is bipartite.  We write $E(\G)$ for
the edge set of a graph $\G$.

We assume that
the sets of the source vertices of $E(\G_0)$,
$E(\G_1)$, $E(\G_2)$ and $E(\G_3)$ are 
$V_0$, $V_0$, $V_1$ and $V_3$, respectively. 
We further assume that the sets of the range vertices of $E(\G_0)$,
$E(\G_1)$, $E(\G_2)$ and $E(\G_3)$ are 
$V_3$, $V_1$, $V_2$ and $V_2$, respectively. 
We draw a diagram as in Fig.~\ref{four} to depict this
situation.  
We assume that the numbers of the edges
of all the four graphs are larger than one.

\begin{figure}[H]
\begin{center}
\begin{tikzpicture}
\draw [thick, ->] (1,1)--(2,1);
\draw [thick, ->] (1,2)--(2,2);
\draw [thick, ->] (1,2)--(1,1);
\draw [thick, ->] (2,2)--(2,1);
\draw (1,1.5)node[left]{$\G_1$};
\draw (2,1.5)node[right]{$\G_3$};
\draw (1.5,1)node[below]{$\G_2$};
\draw (1.5,2)node[above]{$\G_0$};
\draw (1,1)node[below left]{$V_1$};
\draw (1,2)node[above left]{$V_0$};
\draw (2,1)node[below right]{$V_2$};
\draw (2,2)node[above right]{$V_3$};
\end{tikzpicture}
\caption{Four graphs}
\label{four}
\end{center}
\end{figure}

Let $\Delta_{\G_0,xy}$ be the number of edges of $\G_0$ between
$x\in V_0$ and $y\in V_3$.
Let $\Delta_{\G_1,xy}$ be the number of edges of $\G_1$ between
$x\in V_1$ and $y\in V_2$.
Let $\Delta_{\G_2,xy}$ be the number of edges of $\G_2$ between
$x\in V_0$ and $y\in V_1$.
Let $\Delta_{\G_3,xy}$ be the number of edges of $\G_3$ between
$x\in V_3$ and $y\in V_2$.  We assume that we have the following
identities for some positive numbers $\beta_0,\beta_1$.
We assume that we
have a positive number $\mu(x)$ for each vertex $x$
and that the following identities hold.  That is, for each of
$V_0, V_1, V_2, V_3$, the vector given by $\mu(x)$ gives a
Perron-Frobenius eigenvector for the adjacency matrix of one 
of the four graphs, and the numbers $\beta_0,\beta_1$ are
the Perron-Frobenius eigenvalues of these matrices.
Since all the four graphs have more than one edge, we have
$\beta_0,\beta_1 > 1$.  We fix one such $\mu(x)$ for all $x$.

\begin{align*}
\sum_{x\in V_0}\Delta_{\G_0,xy} \mu(x)=\beta_0 (\mu(y)),\quad y\in V_3,\\
\sum_{y\in V_3}\Delta_{\G_0,xy} \mu(y)=\beta_0 (\mu(x)),\quad x\in V_0,\\
\sum_{x\in V_1}\Delta_{\G_2,xy} \mu(x)=\beta_0 (\mu(y)),\quad y\in V_2,\\
\sum_{y\in V_2}\Delta_{\G_2,xy} \mu(y)=\beta_0 (\mu(x)),\quad x\in V_1,\\
\sum_{x\in V_0}\Delta_{\G_1,xy} \mu(x)=\beta_1 (\mu(y)),\quad y\in V_1,\\
\sum_{y\in V_1}\Delta_{\G_1,xy} \mu(y)=\beta_1 (\mu(x)),\quad x\in V_0,\\
\sum_{x\in V_3}\Delta_{\G_3,xy} \mu(x)=\beta_1 (\mu(y)),\quad y\in V_2,\\
\sum_{y\in V_2}\Delta_{\G_3,xy} \mu(y)=\beta_1 (\mu(x)),\quad x\in V_3,
\end{align*}

Here is one example Fig.~\ref{ex1} of four graphs where all the
four graphs are isomorphic and $\beta_0=\beta_1=\sqrt{3}$.

\begin{figure}[H]
\begin{center}
\begin{tikzpicture}
\draw [thick, ->] (1,5)--(2,4);
\draw [thick, ->] (3,5)--(2,4);
\draw [thick, ->] (3,5)--(4,4);
\draw [thick, ->] (5,5)--(4,4);
\draw [thick, ->] (9,5)--(10,4);
\draw [thick, ->] (11,5)--(10,4);
\draw [thick, ->] (11,5)--(12,4);
\draw [thick, ->] (13,5)--(12,4);
\draw [thick, ->] (2,2)--(1,1);
\draw [thick, ->] (2,2)--(3,1);
\draw [thick, ->] (4,2)--(3,1);
\draw [thick, ->] (4,2)--(5,1);
\draw [thick, ->] (10,2)--(9,1);
\draw [thick, ->] (10,2)--(11,1);
\draw [thick, ->] (12,2)--(11,1);
\draw [thick, ->] (12,2)--(13,1);
\draw (3,0.5)node[below]{$\G_2$};
\draw (11,0.5)node[below]{$\G_3$};
\draw (3,3.5)node[below]{$\G_0$};
\draw (11,3.5)node[below]{$\G_1$};
\draw (0.7,1)node[below left]{$V_2$};
\draw (0.7,2)node[above left]{$V_1$};
\draw (0.7,4)node[below left]{$V_3$};
\draw (0.7,5)node[above left]{$V_0$};
\draw (8.7,1)node[below left]{$V_2$};
\draw (8.7,2)node[above left]{$V_3$};
\draw (8.7,4)node[below left]{$V_1$};
\draw (8.7,5)node[above left]{$V_0$};
\draw (1,1)node[below]{$8$};
\draw (3,1)node[below]{$9$};
\draw (5,1)node[below]{$10$};
\draw (9,1)node[below]{$8$};
\draw (11,1)node[below]{$9$};
\draw (13,1)node[below]{$10$};
\draw (2,2)node[above]{$6$};
\draw (4,2)node[above]{$7$};
\draw (10,2)node[above]{$4$};
\draw (12,2)node[above]{$5$};
\draw (2,4)node[below]{$4$};
\draw (4,4)node[below]{$5$};
\draw (10,4)node[below]{$6$};
\draw (12,4)node[below]{$7$};
\draw (1,5)node[above]{$1$};
\draw (3,5)node[above]{$2$};
\draw (5,5)node[above]{$3$};
\draw (9,5)node[above]{$1$};
\draw (11,5)node[above]{$2$};
\draw (13,5)node[above]{$3$};
\end{tikzpicture}
\caption{Example 1:How four graphs are connected}
\label{ex1}
\end{center}
\end{figure}

Another example of four graphs is given in Fig.~\ref{ex2}, where
all the four graphs are different.  We have $\beta_0=2\cos(\pi/12)$
and $\beta_1=(3+\sqrt3)^{1/2}$.

\begin{figure}[H]
\begin{center}
\begin{tikzpicture}
\draw [thick, ->] (1,11)--(2,10);
\draw [thick, ->] (3,11)--(2,10);
\draw [thick, ->] (3,11)--(4,10);
\draw [thick, ->] (5,11)--(4,10);
\draw [thick, ->] (5,11)--(6,10);
\draw [thick, ->] (7,11)--(6,10);
\draw [thick, ->] (7,11)--(8,10);
\draw [thick, ->] (9,11)--(8,10);
\draw [thick, ->] (9,11)--(10,10);
\draw [thick, ->] (11,11)--(10,10);
\draw [thick, ->] (1,8)--(2,7);
\draw [thick, ->] (3,8)--(2,7);
\draw [thick, ->] (3,8)--(4.5,7);
\draw [thick, ->] (4,8)--(4.5,7);
\draw [thick, ->] (5,8)--(4.5,7);
\draw [thick, ->] (6,8)--(4.5,7);
\draw [thick, ->] (6,8)--(7,7);
\draw [thick, ->] (8,8)--(7,7);
\draw [thick, ->] (1,5)--(2,4);
\draw [thick, ->] (3,5)--(2,4);
\draw [thick, ->] (3,5)--(3,4);
\draw [thick, ->] (3,5)--(4,4);
\draw [thick, ->] (5,5)--(4,4);
\draw [thick, ->] (1,2)--(2,1);
\draw [thick, ->] (2,2)--(2,1);
\draw [thick, ->] (2,2)--(3,1);
\draw [thick, ->] (3,2)--(2,1);
\draw [thick, ->] (3,2)--(4,1);
\draw [thick, ->] (4,2)--(3,1);
\draw [thick, ->] (4,2)--(4,1);
\draw [thick, ->] (5,2)--(4,1);
\draw (6,9.5)node[below]{$\G_0$};
\draw (4.5,6.5)node[below]{$\G_1$};
\draw (3,3.5)node[below]{$\G_2$};
\draw (3,0.5)node[below]{$\G_3$};
\draw (0.7,11)node[above left]{$V_0$};
\draw (0.7,10)node[below left]{$V_3$};
\draw (0.7,8)node[above left]{$V_0$};
\draw (0.7,7)node[below left]{$V_1$};
\draw (0.7,5)node[above left]{$V_1$};
\draw (0.7,4)node[below left]{$V_2$};
\draw (0.7,2)node[above left]{$V_3$};
\draw (0.7,1)node[below left]{$V_2$};
\draw (2,1)node[below]{$15$};
\draw (3,1)node[below]{$16$};
\draw (4,1)node[below]{$17$};
\draw (1,2)node[above]{$7$};
\draw (2,2)node[above]{$10$};
\draw (3,2)node[above]{$9$};
\draw (4,2)node[above]{$8$};
\draw (5,2)node[above]{$11$};
\draw (2,4)node[below]{$15$};
\draw (3,4)node[below]{$16$};
\draw (4,4)node[below]{$17$};
\draw (1,5)node[above]{$12$};
\draw (3,5)node[above]{$13$};
\draw (5,5)node[above]{$14$};
\draw (2,7)node[below]{$12$};
\draw (4.5,7)node[below]{$13$};
\draw (7,7)node[below]{$14$};
\draw (1,8)node[above]{$1$};
\draw (3,8)node[above]{$4$};
\draw (4,8)node[above]{$2$};
\draw (5,8)node[above]{$5$};
\draw (6,8)node[above]{$3$};
\draw (8,8)node[above]{$6$};
\draw (2,10)node[below]{$7$};
\draw (4,10)node[below]{$8$};
\draw (6,10)node[below]{$9$};
\draw (8,10)node[below]{$10$};
\draw (10,10)node[below]{$11$};
\draw (1,11)node[above]{$1$};
\draw (3,11)node[above]{$2$};
\draw (5,11)node[above]{$3$};
\draw (7,11)node[above]{$4$};
\draw (9,11)node[above]{$5$};
\draw (11,11)node[above]{$6$};
\end{tikzpicture}
\caption{Example 2:How four graphs are connected}
\label{ex2}
\end{center}
\end{figure}

For an edge $\xi$ of one of the graphs $\G_0,\G_1,\G_2,\G_3$, we
write $s(\xi)$ and $r(\xi)$ for the
source and the range.
Let $\xi_0,\xi_1,\xi_2,\xi_3$ be edges of 
$\G_0,\G_1,\G_2,\G_3$, respectively. 
If we have 
$s(\xi_0)=x_0\in V_0$,
$r(\xi_0)=x_3\in V_3$,
$s(\xi_1)=x_0\in V_0$,
$r(\xi_1)=x_1\in V_1$,
$s(\xi_2)=x_1\in V_1$,
$r(\xi_2)=x_2\in V_2$,
$s(\xi_3)=x_3\in V_3$, and
$r(\xi_3)=x_2\in V_2$,
then we call a combination of $\xi_i$ 
a {\em cell}, as in Fig.~\ref{cell}.

\begin{figure}[H]
\begin{center}
\begin{tikzpicture}
\draw [thick, ->] (1,1)--(2,1);
\draw [thick, ->] (1,2)--(2,2);
\draw [thick, ->] (1,2)--(1,1);
\draw [thick, ->] (2,2)--(2,1);
\draw (1,1.5)node[left]{$\xi_1$};
\draw (2,1.5)node[right]{$\xi_3$};
\draw (1.5,1)node[below]{$\xi_2$};
\draw (1.5,2)node[above]{$\xi_0$};
\draw (1,1)node[below left]{$x_1$};
\draw (1,2)node[above left]{$x_0$};
\draw (2,1)node[below right]{$x_2$};
\draw (2,2)node[above right]{$x_3$};
\end{tikzpicture}
\caption{A cell}
\label{cell}
\end{center}
\end{figure}

For each cell, we assign a complex number.  We call this map
a {\em connection} and write $W$ for this.  We also write
as in Fig.~\ref{connection} for the number assigned by
$W$ to this cell.  If one of the conditions
$s(\xi_0)=s(\xi_1)$,
$r(\xi_0)=s(\xi_3)$,
$r(\xi_1)=s(\xi_2)$ and
$r(\xi_2)=r(\xi_3)$
fails, we understand that the diagram in Fig.~\ref{connection}
denotes the number $0$.

\begin{figure}[H]
\begin{center}
\begin{tikzpicture}
\draw [thick, ->] (1,1)--(2,1);
\draw [thick, ->] (1,2)--(2,2);
\draw [thick, ->] (1,2)--(1,1);
\draw [thick, ->] (2,2)--(2,1);
\draw (1.5,1.5)node{$W$};
\draw (1,1.5)node[left]{$\xi_1$};
\draw (2,1.5)node[right]{$\xi_3$};
\draw (1.5,1)node[below]{$\xi_2$};
\draw (1.5,2)node[above]{$\xi_0$};
\end{tikzpicture}
\caption{A connection value}
\label{connection}
\end{center}
\end{figure}

We first require {\em unitarity} of $W$ 
as in Fig.~\ref{unitarity}, where the bar on the right cell
denotes the complex conjugate.

\begin{figure}[H]
\begin{center}
\begin{tikzpicture}
\draw [thick, ->] (2,1)--(3,1);
\draw [thick, ->] (2,2)--(3,2);
\draw [thick, ->] (2,2)--(2,1);
\draw [thick, ->] (3,2)--(3,1);
\draw (2.5,1.5)node{$W$};
\draw (2,1.5)node[left]{$\xi_1$};
\draw (3,1.5)node[right]{$\xi_3$};
\draw (2.5,1)node[below]{$\xi_2$};
\draw (2.5,2)node[above]{$\xi_4$};
\draw [thick, ->] (4,1)--(5,1);
\draw [thick, ->] (4,2)--(5,2);
\draw [thick, ->] (4,2)--(4,1);
\draw [thick, ->] (5,2)--(5,1);
\draw [thick] (3.7,2.8)--(5.3,2.8);
\draw (4.5,1.5)node{$W$};
\draw (4,1.5)node[left]{$\xi_1$};
\draw (5,1.5)node[right]{$\xi'_3$};
\draw (4.5,1)node[below]{$\xi_2$};
\draw (4.5,2)node[above]{$\xi'_4$};
\draw (0.5,1.5)node{$\sum_{\xi_1,\xi_2}$};
\draw (7.5,1.5)node
{$=\displaystyle\delta_{\xi_3,\xi'_3}\delta_{\xi_4,\xi'_4}$};
\end{tikzpicture}
\caption{Unitarity}
\label{unitarity}
\end{center}
\end{figure}

We define a new connection $W'$ as
in Fig.~\ref{renormalization1}, where $\tilde\xi_0$  
denotes the edge $\xi_0$ with its orientation reversed.
We also require that
this $W'$ satisfies unitarity.  When unitarity holds for
$W$ and $W'$, we say $W$ satisfies {\em bi-unitarity}
and call $W$ a {\em bi-unitary connection}.  Since
we consider only connections with bi-unitarity, we 
simply write a connection for a bi-unitary connection.
Ocneanu and Haagerup found that
a bi-unitary connection characterizes a 
non-degenerate commuting squares of finite dimensional
$C^*$-algebras with a trace as in \cite[Section 11.2]{EK1},

\begin{figure}[H]
\begin{center}
\begin{tikzpicture}
\draw [thick, ->] (2,1)--(3,1);
\draw [thick, ->] (2,2)--(3,2);
\draw [thick, ->] (2,2)--(2,1);
\draw [thick, ->] (3,2)--(3,1);
\draw (2.5,1.5)node{$W'$};
\draw (2,1.5)node[left]{$\xi_3$};
\draw (3,1.5)node[right]{$\xi_1$};
\draw (2.5,1)node[below]{$\tilde\xi_2$};
\draw (2.5,2)node[above]{$\tilde\xi_0$};
\draw [thick, ->] (8,1)--(9,1);
\draw [thick, ->] (8,2)--(9,2);
\draw [thick, ->] (8,2)--(8,1);
\draw [thick, ->] (9,2)--(9,1);
\draw [thick] (7.7,2.8)--(9.3,2.8);
\draw (8.5,1.5)node{$W$};
\draw (8,1.5)node[left]{$\xi_1$};
\draw (9,1.5)node[right]{$\xi_3$};
\draw (8.5,1)node[below]{$\xi_2$};
\draw (8.5,2)node[above]{$\xi_0$};
\draw (5.5,1.5)node{$\displaystyle=\sqrt
{\frac{\mu(s(\xi_0))\mu(r(\xi_2))}{\mu(r(\xi_0))\mu(s(\xi_2))}}$};
\end{tikzpicture}
\caption{Renormalization (1)}
\label{renormalization1}
\end{center}
\end{figure}

\begin{figure}[H]
\begin{center}
\begin{tikzpicture}
\draw [thick, ->] (2,1)--(3,1);
\draw [thick, ->] (2,2)--(3,2);
\draw [thick, ->] (2,2)--(2,1);
\draw [thick, ->] (3,2)--(3,1);
\draw (2.5,1.5)node{$\bar W$};
\draw (2,1.5)node[left]{$\tilde\xi_1$};
\draw (3,1.5)node[right]{$\tilde\xi_3$};
\draw (2.5,1)node[below]{$\xi_0$};
\draw (2.5,2)node[above]{$\xi_2$};
\draw [thick, ->] (8,1)--(9,1);
\draw [thick, ->] (8,2)--(9,2);
\draw [thick, ->] (8,2)--(8,1);
\draw [thick, ->] (9,2)--(9,1);
\draw [thick] (7.7,2.8)--(9.3,2.8);
\draw (8.5,1.5)node{$W$};
\draw (8,1.5)node[left]{$\xi_1$};
\draw (9,1.5)node[right]{$\xi_3$};
\draw (8.5,1)node[below]{$\xi_2$};
\draw (8.5,2)node[above]{$\xi_0$};
\draw (5.5,1.5)node{$\displaystyle=\sqrt
{\frac{\mu(s(\xi_0))\mu(r(\xi_2))}{\mu(r(\xi_0))\mu(s(\xi_2))}}$};
\end{tikzpicture}
\caption{Renormalization (2)}
\label{renormalization2}
\end{center}
\end{figure}

\begin{figure}[H]
\begin{center}
\begin{tikzpicture}
\draw [thick, ->] (2,1)--(3,1);
\draw [thick, ->] (2,2)--(3,2);
\draw [thick, ->] (2,2)--(2,1);
\draw [thick, ->] (3,2)--(3,1);
\draw (2.5,1.5)node{$\bar W'$};
\draw (2,1.5)node[left]{$\tilde\xi_3$};
\draw (3,1.5)node[right]{$\tilde\xi_1$};
\draw (2.5,1)node[below]{$\tilde\xi_0$};
\draw (2.5,2)node[above]{$\tilde\xi_2$};
\draw [thick, ->] (5,1)--(6,1);
\draw [thick, ->] (5,2)--(6,2);
\draw [thick, ->] (5,2)--(5,1);
\draw [thick, ->] (6,2)--(6,1);
\draw (5.5,1.5)node{$W$};
\draw (5,1.5)node[left]{$\xi_1$};
\draw (6,1.5)node[right]{$\xi_3$};
\draw (5.5,1)node[below]{$\xi_2$};
\draw (5.5,2)node[above]{$\xi_0$};
\draw (4,1.5)node{$=$};
\end{tikzpicture}
\caption{Renormalization (3)}
\label{renormalization3}
\end{center}
\end{figure}

We also define new connections $\bar W$ and
$\bar W'$ as in
Fig.~\ref{renormalization2} and \ref{renormalization3}.
They both satisfy bi-unitarity automatically.
We also define a value of another diagram as in
Fig.~\ref{convention}.
Note that we have Fig.~\ref{convention2} due to
Fig.~\ref{renormalization1} and \ref{convention}.

\begin{figure}[H]
\begin{center}
\begin{tikzpicture}
\draw [thick, ->] (3,1)--(2,1);
\draw [thick, ->] (3,2)--(2,2);
\draw [thick, ->] (2,2)--(2,1);
\draw [thick, ->] (3,2)--(3,1);
\draw (2.5,1.5)node{$W$};
\draw (2,1.5)node[left]{$\xi_3$};
\draw (3,1.5)node[right]{$\xi_1$};
\draw (2.5,1)node[below]{$\xi_2$};
\draw (2.5,2)node[above]{$\xi_0$};
\draw [thick, ->] (5,1)--(6,1);
\draw [thick, ->] (5,2)--(6,2);
\draw [thick, ->] (5,2)--(5,1);
\draw [thick, ->] (6,2)--(6,1);
\draw [thick] (4.7,2.8)--(6.3,2.8);
\draw (5.5,1.5)node{$W$};
\draw (5,1.5)node[left]{$\xi_1$};
\draw (6,1.5)node[right]{$\xi_3$};
\draw (5.5,1)node[below]{$\xi_2$};
\draw (5.5,2)node[above]{$\xi_0$};
\draw (4,1.5)node{$=$};
\end{tikzpicture}
\caption{Conjugate convention}
\label{convention}
\end{center}
\end{figure}

\begin{figure}[H]
\begin{center}
\begin{tikzpicture}
\draw [thick, ->] (3,1)--(2,1);
\draw [thick, ->] (3,2)--(2,2);
\draw [thick, ->] (2,2)--(2,1);
\draw [thick, ->] (3,2)--(3,1);
\draw (2.5,1.5)node{$W$};
\draw (2,1.5)node[left]{$\xi_1$};
\draw (3,1.5)node[right]{$\xi_3$};
\draw (2.5,1)node[below]{$\tilde\xi_2$};
\draw (2.5,2)node[above]{$\tilde\xi_0$};
\draw [thick, ->] (8,1)--(9,1);
\draw [thick, ->] (8,2)--(9,2);
\draw [thick, ->] (8,2)--(8,1);
\draw [thick, ->] (9,2)--(9,1);
\draw (8.5,1.5)node{$W$};
\draw (8,1.5)node[left]{$\xi_1$};
\draw (9,1.5)node[right]{$\xi_3$};
\draw (8.5,1)node[below]{$\xi_2$};
\draw (8.5,2)node[above]{$\xi_0$};
\draw (5.5,1.5)node{$\displaystyle=\sqrt
{\frac{\mu(s(\xi_0))\mu(r_(\xi_2))}{\mu(r(\xi_0))\mu(s(\xi_2))}}$};
\end{tikzpicture}
\caption{Renormalization convention}
\label{convention2}
\end{center}
\end{figure}

Suppose we have two connections $W_1$ and $W_2$
as depicted in Fig.~\ref{twoconn}.
Then we define the {\em product connection} as in
Fig.~\ref{product}.  The left hand side is a new product 
connection.  Its top and bottom graphs are $\G_0$ and
$\G_4$, respectively.  Its left graph
is concatenation of $\G_1$ and $\G_5$, and 
its right graphs is concatenation of $\G_3$
and $\G_7$.

\begin{figure}[H]
\begin{center}
\begin{tikzpicture}
\draw [thick, ->] (1,1)--(2,1);
\draw [thick, ->] (1,2)--(2,2);
\draw [thick, ->] (1,2)--(1,1);
\draw [thick, ->] (2,2)--(2,1);
\draw (1.5,1.5)node{$W_1$};
\draw (1,1.5)node[left]{$\G_1$};
\draw (2,1.5)node[right]{$\G_3$};
\draw (1.5,1)node[below]{$\G_2$};
\draw (1.5,2)node[above]{$\G_0$};
\draw (1,1)node[below left]{$V_1$};
\draw (1,2)node[above left]{$V_0$};
\draw (2,1)node[below right]{$V_2$};
\draw (2,2)node[above right]{$V_3$};
\draw [thick, ->] (4,1)--(5,1);
\draw [thick, ->] (4,2)--(5,2);
\draw [thick, ->] (4,2)--(4,1);
\draw [thick, ->] (5,2)--(5,1);
\draw (4.5,1.5)node{$W_2$};
\draw (4,1.5)node[left]{$\G_5$};
\draw (5,1.5)node[right]{$\G_7$};
\draw (4.5,1)node[below]{$\G_4$};
\draw (4.5,2)node[above]{$\G_2$};
\draw (4,1)node[below left]{$V_4$};
\draw (4,2)node[above left]{$V_1$};
\draw (5,1)node[below right]{$V_5$};
\draw (5,2)node[above right]{$V_2$};
\end{tikzpicture}
\caption{Two connections}
\label{twoconn}
\end{center}
\end{figure}

\begin{figure}[H]
\begin{center}
\begin{tikzpicture}
\draw [thick, ->] (1,1)--(2,1);
\draw [thick, ->] (1,3)--(2,3);
\draw [thick, ->] (1,2)--(1,1);
\draw [thick, ->] (2,2)--(2,1);
\draw [thick, ->] (1,3)--(1,2);
\draw [thick, ->] (2,3)--(2,2);
\draw (1,2.5)node[left]{$\xi_1$};
\draw (1,1.5)node[left]{$\xi_5$};
\draw (2,2.5)node[right]{$\xi_3$};
\draw (2,1.5)node[right]{$\xi_7$};
\draw (1.5,1)node[below]{$\xi_4$};
\draw (1.5,3)node[above]{$\xi_0$};
\draw [thick, ->] (5,1.5)--(6,1.5);
\draw [thick, ->] (5,2.5)--(6,2.5);
\draw [thick, ->] (5,2.5)--(5,1.5);
\draw [thick, ->] (6,2.5)--(6,1.5);
\draw (5.5,2)node{$W_1$};
\draw (5,2)node[left]{$\xi_1$};
\draw (6,2)node[right]{$\xi_3$};
\draw (5.5,1.5)node[below]{$\xi_2$};
\draw (5.5,2.5)node[above]{$\xi_0$};
\draw [thick, ->] (7.5,1.5)--(8.5,1.5);
\draw [thick, ->] (7.5,2.5)--(8.5,2.5);
\draw [thick, ->] (7.5,2.5)--(7.5,1.5);
\draw [thick, ->] (8.5,2.5)--(8.5,1.5);
\draw (8,2)node{$W_2$};
\draw (7.5,2)node[left]{$\xi_5$};
\draw (8.5,2)node[right]{$\xi_7$};
\draw (8,1.5)node[below]{$\xi_4$};
\draw (8,2.5)node[above]{$\xi_2$};
\draw (3.5,1.8)node{$=\displaystyle\sum_{\xi_2}$};
\end{tikzpicture}
\caption{The product connection of $W_1$ and $W_2$}
\label{product}
\end{center}
\end{figure}

We now define unitary equivalence of two connections
$W_1$ and $W_2$ on the same graphs depicted as in Fig.~\ref{four}.
Suppose we have two unitary matrices $U$, $V$ whose index sets
are the edge sets of $\G_1$, $\G_3$, respectively.  Furthermore,
we assume $U_{\xi_1,\xi'_1}=0$ if $s(\xi_1)\neq s(\xi'_1)$
or $r(\xi_1)\neq r(\xi'_1)$ and a similar property for $V$.
Then we say $W_1$ and $W_2$ are equivalent if the identity
as in Fig.~\ref{equiv} holds.

\begin{figure}[H]
\begin{center}
\begin{tikzpicture}
\draw [thick, ->] (1,1)--(2,1);
\draw [thick, ->] (1,2)--(2,2);
\draw [thick, ->] (1,2)--(1,1);
\draw [thick, ->] (2,2)--(2,1);
\draw (1.5,1.5)node{$W_1$};
\draw (1,1.5)node[left]{$\xi_1$};
\draw (2,1.5)node[right]{$\xi_3$};
\draw (1.5,1)node[below]{$\xi_2$};
\draw (1.5,2)node[above]{$\xi_0$};
\draw (3.8,1.3)node{$\displaystyle=
\sum_{\xi'_1,\xi'_3}U_{\xi_1,\xi'_1}$};
\draw (7.6,1.45)node{$V_{\xi'_3,\xi_3}$};
\draw [thick, ->] (5.5,1)--(6.5,1);
\draw [thick, ->] (5.5,2)--(6.5,2);
\draw [thick, ->] (5.5,2)--(5.5,1);
\draw [thick, ->] (6.5,2)--(6.5,1);
\draw (6,1.5)node{$W_2$};
\draw (5.5,1.5)node[left]{$\xi'_1$};
\draw (6.5,1.5)node[right]{$\xi'_3$};
\draw (6,1)node[below]{$\xi_2$};
\draw (6,2)node[above]{$\xi_0$};
\end{tikzpicture}
\caption{Unitary equivalence of $W_1$ and $W_2$}
\label{equiv}
\end{center}
\end{figure}

We now assume that we have two connections $W_1$ and $W_2$
as in Fig.~\ref{twoconn2}.

\begin{figure}[H]
\begin{center}
\begin{tikzpicture}
\draw [thick, ->] (1,1)--(2,1);
\draw [thick, ->] (1,2)--(2,2);
\draw [thick, ->] (1,2)--(1,1);
\draw [thick, ->] (2,2)--(2,1);
\draw (1.5,1.5)node{$W_1$};
\draw (1,1.5)node[left]{$\G_1$};
\draw (2,1.5)node[right]{$\G_3$};
\draw (1.5,1)node[below]{$\G_2$};
\draw (1.5,2)node[above]{$\G_0$};
\draw (1,1)node[below left]{$V_1$};
\draw (1,2)node[above left]{$V_0$};
\draw (2,1)node[below right]{$V_2$};
\draw (2,2)node[above right]{$V_3$};
\draw [thick, ->] (4,1)--(5,1);
\draw [thick, ->] (4,2)--(5,2);
\draw [thick, ->] (4,2)--(4,1);
\draw [thick, ->] (5,2)--(5,1);
\draw (4.5,1.5)node{$W_2$};
\draw (4,1.5)node[left]{$\G'_1$};
\draw (5,1.5)node[right]{$\G'_3$};
\draw (4.5,1)node[below]{$\G_2$};
\draw (4.5,2)node[above]{$\G_0$};
\draw (4,1)node[below left]{$V_1$};
\draw (4,2)node[above left]{$V_0$};
\draw (5,1)node[below right]{$V_2$};
\draw (5,2)node[above right]{$V_3$};
\end{tikzpicture}
\caption{Two connections}
\label{twoconn2}
\end{center}
\end{figure}

We define the sum graph $\G''_1$ of $\G_1$ and $\G'_1$
as follows.  This is a bipartite
graph with the two disjoint vertex sets $V_0$ and $V_1$ and the
edge set being the disjoint union of 
$E(\G_1)$ and $E(\G'_1)$.  We similarly define the sum 
graph $\G''_3$ of $\G_3$ and $\G'_3$
We next define the direct sum connection $W_\oplus W_2$ as in
Fig.\ref{direct} on the four graphs in Fig.~\ref{four2}.

\begin{figure}[H]
\begin{center}
\begin{tikzpicture}
\draw [thick, ->] (1,1)--(2,1);
\draw [thick, ->] (1,2)--(2,2);
\draw [thick, ->] (1,2)--(1,1);
\draw [thick, ->] (2,2)--(2,1);
\draw (1,1.5)node[left]{$\G''_1$};
\draw (2,1.5)node[right]{$\G''_3$};
\draw (1.5,1)node[below]{$\G_2$};
\draw (1.5,2)node[above]{$\G_0$};
\draw (1,1)node[below left]{$V_1$};
\draw (1,2)node[above left]{$V_0$};
\draw (2,1)node[below right]{$V_2$};
\draw (2,2)node[above right]{$V_3$};
\end{tikzpicture}
\caption{The four graphs for $W_1\oplus W_2$}
\label{four2}
\end{center}
\end{figure}

\begin{figure}[H]
\begin{center}
\begin{tikzpicture}
\draw [thick, ->] (1,2.5)--(2,2.5);
\draw [thick, ->] (1,3.5)--(2,3.5);
\draw [thick, ->] (1,3.5)--(1,2.5);
\draw [thick, ->] (2,3.5)--(2,2.5);
\draw (1,3)node[left]{$\xi''_1$};
\draw (2,3)node[right]{$\xi''_3$};
\draw (1.5,2.5)node[below]{$\xi_2$};
\draw (1.5,3.5)node[above]{$\xi_0$};
\draw (3.5,3.1)node{$=\left\{\begin{array}{c}
\vphantom{X}\\
\vphantom{X}\\
\vphantom{X}\\
\vphantom{X}\\
\vphantom{X}\\
\vphantom{X}\\
\vphantom{X}\\
\vphantom{X}\\
\vphantom{X}\\
\vphantom{X}\\
\vphantom{X}\\
\vphantom{X}
\end{array}\right.$};
\draw [thick, ->] (4.5,2.5)--(5.5,2.5);
\draw [thick, ->] (4.5,3.5)--(5.5,3.5);
\draw [thick, ->] (4.5,5)--(5.5,5);
\draw [thick, ->] (4.5,6)--(5.5,6);
\draw [thick, ->] (4.5,3.5)--(4.5,2.5);
\draw [thick, ->] (5.5,3.5)--(5.5,2.5);
\draw [thick, ->] (4.5,6)--(4.5,5);
\draw [thick, ->] (5.5,6)--(5.5,5);
\draw (5,5.5)node{$W_1$};
\draw (5,3)node{$W_2$};
\draw (4.5,3)node[left]{$\xi''_1$};
\draw (5.5,5.5)node[right]{$\xi'''_3$};
\draw (4.5,5.5)node[left]{$\xi''_1$};
\draw (5.5,3)node[right]{$\xi''_3$};
\draw (5,2.5)node[below]{$\xi_2$};
\draw (5,3.5)node[above]{$\xi_0$};
\draw (5,5)node[below]{$\xi_2$};
\draw (5,6)node[above]{$\xi_0$};
\draw (5,0.8)node{$0$};
\draw (10,5.5)node{if $\xi''_1\in E(\G_1),\xi''_3\in E(\G_3)$,};
\draw (10,3)node{if $\xi''_1\in E(\G'_1),\xi''_3\in E(\G'_3)$,};
\draw (8.65,0.8)node{otherwise.};
\end{tikzpicture}
\caption{The direct sum connection}
\label{direct}
\end{center}
\end{figure}

If a connection $W$ is written as $W_1\oplus W_2$, we call
it a {\em direct sum decomposition}.
For a connection $W$, if none of the connection
unitarily equivalent to $W$ has a direct sum decomposition,
we say that $W$ is {\em irreducible}.

At the end of this Section, we present how a {\em flat field of strings}
in the sense of \cite[page 563]{EK1} acts on an open string
bimodule in \cite[Claim 1 on page 19]{AH}, because this action
is not explicitly written in \cite{AH}.

We start with a connection $W$ as in the above. Then we choose
initial vertices $*_0\in V_0$ and $*_1\in V_1$, and construct
an open string bimodule $X^W$ as in \cite[page 14]{AH}.
Take a general element in $X^W$ before the completion as in
Fig.~\ref{openstring}.  Without loss of generality, we may
assume that the horizontal length of this string is $2k$, even, 
and thus $x\in V_0$.

\begin{figure}[H]
\begin{center}
\begin{tikzpicture}
\draw [thick, ->] (1,2)--(2,2);
\draw [thick, ->] (2,2)--(3,2);
\draw [thick, ->] (4,2)--(5,2);
\draw [thick, ->] (5,2)--(5,1);
\draw [thick, ->] (1,1)--(2,1);
\draw [thick, ->] (2,1)--(3,1);
\draw [thick, ->] (4,1)--(5,1);
\draw (1,2)node[left]{$*_0$};
\draw (1,1)node[left]{$*_1$};
\draw (5,2)node[right]{$x$};
\draw (5,1)node[right]{$y$};
\draw (3.7,1)node{$\cdots$};
\draw (3.7,2)node{$\cdots$};
\end{tikzpicture}
\caption{An element of $X^W$}
\label{openstring}
\end{center}
\end{figure}

Let $f$ be a flat field of strings on $\G_1$.  This means
the following. Label the
edges on $\G_1$ as $\rho_1,\rho_2,\dots,\rho_n$.
Then a flat field $f$ is written as in Fig.~\ref{field}
which satisfies Fig.~\ref{flatf}.

\begin{figure}[H]
\begin{center}
\begin{tikzpicture}
\draw [thick, ->] (3.7,1.5)--(3.7,0.5);
\draw [thick, ->] (5.2,1.5)--(5.2,0.5);
\draw (1.4,0.8)node{$\displaystyle
\sum_{s(\rho_i)=s(\rho_j),r(\rho_i)=r(\rho_j)} f_{\rho_i,\rho_j}($};
\draw (3.8,1)node[right]{$\rho_i$};
\draw (5.3,1)node[right]{$\rho_j$};
\draw (4.55,0.8)node{$,$};
\draw (5.95,1)node{$)$};
\end{tikzpicture}
\caption{A flat field $f$}
\label{field}
\end{center}
\end{figure}

\begin{figure}[H]
\begin{center}
\begin{tikzpicture}
\draw [thick, ->] (3,1)--(4,1);
\draw [thick, ->] (4,1)--(5,1);
\draw [thick, ->] (3,2)--(4,2);
\draw [thick, ->] (4,2)--(5,2);
\draw [thick, ->] (3,3)--(4,3);
\draw [thick, ->] (4,3)--(5,3);
\draw [thick, ->] (3,1)--(3,2);
\draw [thick, ->] (3,3)--(3,2);
\draw [thick, ->] (4,1)--(4,2);
\draw [thick, ->] (4,3)--(4,2);
\draw [thick, ->] (5,1)--(5,2);
\draw [thick, ->] (5,3)--(5,2);
\draw (3.5,2.5)node{$W_a$};
\draw (3.5,1.5)node{$W_{\bar a}$};
\draw (4.5,2.5)node{$W_{a'}$};
\draw (4.5,1.5)node{$W_{\bar a'}$};
\draw (3.5,1)node[below]{$\xi'_1$};
\draw (3.5,3)node[above]{$\xi_1$};
\draw (4.5,1)node[below]{$\tilde\xi'_2$};
\draw (4.5,3)node[above]{$\tilde\xi_2$};
\draw (3,1.5)node[left]{$\rho_2$};
\draw (3,2.5)node[left]{$\rho_1$};
\draw (5,1.5)node[right]{$\rho'_2$};
\draw (5,2.5)node[right]{$\rho'_1$};
\draw (1.8,1.8)node{$\displaystyle\sum_{\rho_1,\rho_2}f_{\rho_1,\rho_2}$};
\draw (7,2)node{$=\delta_{\xi_1,\xi_1'}\delta_{\xi_2,\xi_2'}
f_{\rho'_1,\rho'_2}$};
\end{tikzpicture}
\caption{Flatness of $f$}
\label{flatf}
\end{center}
\end{figure}

Now fix the vertices $x\in V_0$ and $y\in V_1$.
Label edges from $x$ to $y$ on $\G_1$ as $\rho_1,\rho_2,\dots,\rho_n$.
Then the part of $f$ starting with $x$ and ending with $y$
is written as in Fig.~\ref{string}.

\begin{figure}[H]
\begin{center}
\begin{tikzpicture}
\draw [thick, ->] (2.5,1.5)--(2.5,0.5);
\draw [thick, ->] (4,1.5)--(4,0.5);
\draw (1.4,0.8)node{$\displaystyle\sum_{i,j} f_{\rho_i,\rho_j}($};
\draw (2.5,1)node[right]{$\rho_i$};
\draw (4,1)node[right]{$\rho_j$};
\draw (3.25,0.8)node{$,$};
\draw (4.75,1)node{$)$};
\end{tikzpicture}
\caption{A part of a flat field $f$}
\label{string}
\end{center}
\end{figure}

We now define an action of $f$ on the element in Fig.~\ref{openstring}.
We may assume that the edge from $x$ to $y$ in Fig.~\ref{openstring}
is $\rho_k$.  Then the result of this action is defined to be as in
Fig.~\ref{openstring2}.

\begin{figure}[H]
\begin{center}
\begin{tikzpicture}
\draw [thick, ->] (2,2)--(3,2);
\draw [thick, ->] (3,2)--(4,2);
\draw [thick, ->] (5,2)--(6,2);
\draw [thick, ->] (6,2)--(6,1);
\draw [thick, ->] (2,1)--(3,1);
\draw [thick, ->] (3,1)--(4,1);
\draw [thick, ->] (5,1)--(6,1);
\draw (2,2)node[left]{$*_0$};
\draw (2,1)node[left]{$*_1$};
\draw (6,2)node[right]{$x$};
\draw (6,1)node[right]{$y$};
\draw (4.7,1)node{$\cdots$};
\draw (4.7,2)node{$\cdots$};
\draw (6,1.5)node[right]{$\rho_i$};
\draw (0.7,1.4)node{$\displaystyle\sum_i f_{\rho_i,\rho_k}$};
\end{tikzpicture}
\caption{The result of an action of $f$}
\label{openstring2}
\end{center}
\end{figure}

We prove that this action is well-defined.
Consider a vector represented by Fig.~\ref{openstring} and
write $s$ for this.  We also label the edge from $x$ to $y$
in Fig.~\ref{openstring} as $\rho_i$.
We rewrite the vector $s$ using the basis corresponding to the
diagram in Fig.~\ref{diag}.  This element is represented as
in Fig.~\ref{openstring3}.  We now consider the action of $f$ on this
element.  Due to the well-definedness of the action of string algebras
on the open string bimodule, this action is given by the action of
$f$ written in terms of the basis corresponding to the diagram
in Fig.~\ref{diag2}, but this is simply the action of
a parallel transport $f'$ of $f$ on the vector $s$ written
in terms of basis as in Fig.~\ref{diag}.
(See \cite[Definition 11.18]{EK1} for the notion of
parallel transport.)  Then this is exactly equal to the action of
$f$ on $s$ considered with respect to the basis corresponding
to the diagram in Fig.~\ref{diag2}.  The same argument
shows the well-definedness of the step from $2k$ to
$2k+1$, so we have proved the following Proposition.

\begin{figure}[H]
\begin{center}
\begin{tikzpicture}
\draw [thick, ->] (1,2)--(2,2);
\draw [thick, ->] (2,2)--(3,2);
\draw [thick, ->] (4,2)--(5,2);
\draw [thick, ->] (5,2)--(6,2);
\draw [thick, ->] (6,2)--(6,1);
\draw [thick, ->] (1,1)--(2,1);
\draw [thick, ->] (2,1)--(3,1);
\draw [thick, ->] (4,1)--(5,1);
\draw [thick, ->] (5,1)--(6,1);
\draw (1,2)node[left]{$*_0$};
\draw (1,1)node[left]{$*_1$};
\draw (5,2)node[above]{$2k$};
\draw (6,2)node[above]{$2k+1$};
\draw (3.7,1)node{$\cdots$};
\draw (3.7,2)node{$\cdots$};
\end{tikzpicture}
\caption{A basis for a finite dimensional subspace of $X^W$}
\label{diag}
\end{center}
\end{figure}

\begin{figure}[H]
\begin{center}
\begin{tikzpicture}
\draw [thick, ->] (4,2)--(5,2);
\draw [thick, ->] (5,2)--(6,2);
\draw [thick, ->] (7,2)--(8,2);
\draw [thick, ->] (8,2)--(9,2);
\draw [thick, ->] (9,2)--(9,1);
\draw [thick, ->] (4,1)--(5,1);
\draw [thick, ->] (5,1)--(6,1);
\draw [thick, ->] (7,1)--(8,1);
\draw [thick, ->] (8,1)--(9,1);
\draw (4,2)node[left]{$*_0$};
\draw (4,1)node[left]{$*_1$};
\draw (8,2)node[above]{$2k$};
\draw (9,2)node[above]{$2k+1$};
\draw (8.5,2)node[below]{$\eta'$};
\draw (9,1.5)node[right]{$\xi'$};
\draw (8.5,1)node[below]{$\eta$};
\draw (6.7,1)node{$\cdots$};
\draw (6.7,2)node{$\cdots$};
\draw [thick, ->] (2,2)--(3,2);
\draw [thick, ->] (2,1)--(3,1);
\draw [thick, ->] (3,2)--(3,1);
\draw [thick, ->] (2,2)--(2,1);
\draw (2.5,1.5)node{$W$};
\draw (2.5,2)node[above]{$\eta'$};
\draw (3,1.5)node[right]{$\xi'$};
\draw (2.5,1)node[below]{$\eta$};
\draw (2,1.5)node[left]{$\xi_i$};
\draw (1.1,1.3)node{$\displaystyle\sum_{\eta,\eta,\xi'}$};
\end{tikzpicture}
\caption{The element $s$ written in terms of a new basis}
\label{openstring3}
\end{center}
\end{figure}

\begin{figure}[H]
\begin{center}
\begin{tikzpicture}
\draw [thick, ->] (1,2)--(2,2);
\draw [thick, ->] (2,2)--(3,2);
\draw [thick, ->] (4,2)--(5,2);
\draw [thick, ->] (5,2)--(6,2);
\draw [thick, ->] (6,2)--(6,1);
\draw (1,2)node[left]{$*_0$};
\draw (5,2)node[above]{$2k$};
\draw (6,2)node[above]{$2k+1$};
\draw (3.7,2)node{$\cdots$};
\end{tikzpicture}
\caption{A basis for a finite dimensional string subalgebra}
\label{diag2}
\end{center}
\end{figure}

\begin{proposition}
The above action of flat fields of strings gives a 
self-intertwiner of $X^W$ commuting with the left
and right actions of hyperfinite II$_1$ factors
arising from the string algebras on $\mathcal G_0$
and $\mathcal G_2$, and all self-intertwiners of
$X^W$ arise in this way.
\end{proposition}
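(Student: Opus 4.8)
My plan is to prove the two assertions separately: that the action $\pi(f)$ of a flat field $f$ defined in Fig.~\ref{openstring2} is a bounded self-intertwiner, and conversely that every self-intertwiner is of the form $\pi(f)$. Throughout, write $M$ and $N$ for the hyperfinite II$_1$ factors generated by the string algebras on $\G_0$ and $\G_2$, so that a self-intertwiner is precisely an element of $\End_{M\text{-}N}(X^W)$. The well-definedness argument given just before the statement, based on parallel transport and the change of basis in Fig.~\ref{openstring3}, is the technical core on which both directions rest.

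For the forward direction it remains only to check boundedness and the intertwining property. On the finite-dimensional subspace of strings of horizontal length $2k$, the operator $\pi(f)$ acts as the coefficient matrix $(c_{ij})$ of $f$ on the index of the rightmost vertical edge (Fig.~\ref{openstring2}) and trivially on the remaining path data; since the inner-product weights depend only on the endpoints, $\pi(f)$ is bounded uniformly in $k$ and extends to the completion. For the intertwining property it suffices to check commutation with generators of the $M$- and $N$-actions. Generators acting near the base points $*_0,*_1$ commute with $\pi(f)$ by locality, since they do not touch the rightmost vertical edge; generators that extend a string past that edge move the action across a cell of $W$, and commutation with them is exactly the $2k\to 2k+1$ compatibility, i.e.\ the invariance of $f$ under the parallel transport $f'$, which is the flatness of $f$. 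Hence $\pi(f)\in\End_{M\text{-}N}(X^W)$.

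For the converse, take $T\in\End_{M\text{-}N}(X^W)$. I would first show that $T$ acts locally on the rightmost vertical edge: because $T$ commutes with both actions, which build the bulk of every string by lengthening from $*_0,*_1$, the value of $T$ on a general string of Fig.~\ref{openstring} is determined by its values on the shortest strings, and $T$ must leave the top and bottom strands intact, altering only the connecting edge $\xi_k$ by a matrix $c_{ij}(x)$ attached to each $x\in V_0$. This data is a field of strings with $T=\pi(c)$. Applying the intertwining property to the generators that cross a cell of $W$ then forces $(c_{ij}(x))$ to be invariant under parallel transport, i.e.\ flat, so that $T$ comes from a flat field.

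The step I expect to be the main obstacle is the locality claim in the converse --- establishing, with no finite-depth or flatness hypothesis on $W$, that an arbitrary bimodule endomorphism can modify only the boundary vertical edge and must leave the bulk path data untouched. This rests on the joint action of the two string algebras being irreducible on the bulk, a consequence of the connectedness of $\G_0$ and $\G_2$ together with the non-degeneracy of the commuting square; once locality is in hand, flatness of the resulting field is the comparatively routine second step.
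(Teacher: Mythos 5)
Your forward direction is, in substance, the paper's own proof. The entire argument the paper gives for this proposition is the construction of the action together with its well-definedness across the inductive limit: a level-$2k$ string rewritten in the longer basis of Fig.~\ref{openstring3} must transform under $f$ in the same way, and this is exactly the statement that the parallel transport $f'$ of $f$ (Figs.~\ref{diag}, \ref{diag2}) acts as $f$ itself, i.e.\ flatness. Your reading of ``commutation with generators that cross a cell'' as the $2k\to 2k+1$ compatibility is the same computation, and the boundedness remark is routine because the coefficients $(c_{ij})$ act blockwise in the vertical-edge index with inner-product weights constant on each block.

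The converse, however, contains a genuine gap, located precisely at the step you flag, and the repair you sketch would not close it. Commutation with the two string algebras shows only that $T$ is determined by its values $T(p,\xi,q)$ on one representative pair of horizontal paths per pair of endpoints; it does \emph{not} show that $T(p,\xi,q)$ again lies at the same finite level with the strands $p,q$ unchanged. Since $T$ is only assumed to commute with the weak closures $M$ and $N$, it could a priori spread a finite-level string over all higher levels, and no finite-level commutation argument excludes this: the orthogonal projection onto a finite level does not belong to any finite stage of the two string algebras acting jointly (the embeddings entangle the vertical edge with the path extensions via the connection), and showing that it belongs to the von Neumann algebra $M\vee N^{\mathrm{op}}$ is essentially equivalent to the statement being proved --- so an appeal to it is circular. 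Your stated reason also cannot be right as it stands: if the joint action of the two factors were irreducible, the commutant would be $\C$ and there would be no nontrivial flat fields, contradicting the first half of the proposition. What this step actually requires is Ocneanu's compactness argument (\cite[Theorem 11.15]{EK1}), equivalently the proof of \cite[Claim 1 on page 19]{AH}: an analytic estimate using the trace and the finite dimensionality of a fixed string algebra, which forces every element of the commutant into a finite-dimensional corner; only after that does your concluding step (invariance under parallel transport, hence flatness) apply. Note that the paper itself does not reprove this direction either --- it supplies the explicit action and its well-definedness and takes the surjectivity onto all self-intertwiners from \cite{AH} --- so a self-contained converse must import or redo the compactness argument; connectedness of $\G_0$, $\G_2$ and non-degeneracy of the commuting square alone do not suffice.
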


\section{$4$-tensors}

We start with a connection $W_a$ and define the corresponding
$4$-tensor $a$ as in Fig.~\ref{tensor}.
We have the conjugate connection $\overline{W_a}$ as in
Fig.~\ref{renormalization2}.  We also write 
$W_{\bar a}$ for this connection and define
a $4$-tensor $\bar a$ corresponding to $W_{\bar a}$
as in Fig.~\ref{tensor2}.

\begin{figure}[H]
\begin{center}
\begin{tikzpicture}
\draw [thick] (2.5,1)--(2.5,1.2);
\draw [thick] (2.5,2)--(2.5,1.8);
\draw [thick] (2,1.5)--(2.2,1.5);
\draw [thick] (3,1.5)--(2.8,1.5);
\draw (2.5,1.5) circle (0.3);
\draw (2.5,1.5)node{$a$};
\draw (2,1.5)node[left]{$\rho$};
\draw (3,1.5)node[right]{$\sigma$};
\draw (2.5,1)node[below]{$\eta$};
\draw (2.5,2)node[above]{$\xi$};
\draw [thick, ->] (8,1)--(9,1);
\draw [thick, ->] (8,2)--(9,2);
\draw [thick, ->] (8,2)--(8,1);
\draw [thick, ->] (9,2)--(9,1);
\draw (8.5,1.5)node{$W_a$};
\draw (8,1.5)node[left]{$\rho$};
\draw (8.5,1)node[below]{$\eta$};
\draw (9,1.5)node[right]{$\sigma$};
\draw (8.5,2)node[above]{$\xi$};
\draw (5.5,1.5)node{$\displaystyle=
\sqrt[4]{\frac{\mu(s(\xi))\mu(r(\eta))}{\mu(r(\xi))\mu(s(\eta))}}$};
\end{tikzpicture}
\caption{The $4$-tensor $a$ and the connection $W_a$}
\label{tensor}
\end{center}
\end{figure}

\begin{figure}[H]
\begin{center}
\begin{tikzpicture}
\draw [thick] (2.5,1)--(2.5,1.2);
\draw [thick] (2.5,2)--(2.5,1.8);
\draw [thick] (2,1.5)--(2.2,1.5);
\draw [thick] (3,1.5)--(2.8,1.5);
\draw (2.5,1.5) circle (0.3);
\draw (2.5,1.5)node{$\bar a$};
\draw (2,1.5)node[left]{$\tilde\rho$};
\draw (3,1.5)node[right]{$\tilde\sigma$};
\draw (2.5,1)node[below]{$\xi$};
\draw (2.5,2)node[above]{$\eta$};
\draw [thick] (5.5,1)--(5.5,1.2);
\draw [thick] (5.5,2)--(5.5,1.8);
\draw [thick] (5,1.5)--(5.2,1.5);
\draw [thick] (6,1.5)--(5.8,1.5);
\draw (5.5,1.5) circle (0.3);
\draw (5.5,1.5)node{$a$};
\draw (5,1.5)node[left]{$\rho$};
\draw (6,1.5)node[right]{$\sigma$};
\draw (5.5,1)node[below]{$\eta$};
\draw (5.5,2)node[above]{$\xi$};
\draw [thick] (4.8,2.8)--(6.2,2.8);
\draw (4,1.5)node{$=$};
\draw [thick, ->] (11,1)--(12,1);
\draw [thick, ->] (11,2)--(12,2);
\draw [thick, ->] (11,2)--(11,1);
\draw [thick, ->] (12,2)--(12,1);
\draw (11.5,1.5)node{$W_{\bar a}$};
\draw (11,1.5)node[left]{$\tilde\rho$};
\draw (11.5,1)node[below]{$\xi$};
\draw (12,1.5)node[right]{$\tilde\sigma$};
\draw (11.5,2)node[above]{$\eta$};
\draw (8.5,1.5)node{$\displaystyle=
\sqrt[4]{\frac{\mu(r(\xi))\mu(s(\eta))}{\mu(s(\xi))\mu(r(\eta))}}$};
\end{tikzpicture}
\caption{The $4$-tensors $a, \bar a$ and 
the connection $W_{\bar a}$}
\label{tensor2}
\end{center}
\end{figure}

Then we have Figures \ref{biuni1} and \ref{biuni2} to represent
bi-unitarity.

\begin{figure}[H]
\begin{center}
\begin{tikzpicture}
\draw [thick] (4.5,1)--(4.5,1.2);
\draw [thick] (4.5,2)--(4.5,1.8);
\draw [thick] (4,1.5)--(4.2,1.5);
\draw [thick] (5,1.5)--(4.8,1.5);
\draw (4.5,1.5) circle (0.3);
\draw (4.5,1.5)node{$\bar a$};
\draw [thick] (4.5,2)--(4.5,2.2);
\draw [thick] (4.5,3)--(4.5,2.8);
\draw [thick] (4,2.5)--(4.2,2.5);
\draw [thick] (5,2.5)--(4.8,2.5);
\draw (4.5,2.5) circle (0.3);
\draw (4.5,2.5)node{$a$};
\draw [thick] (4,2.5) arc (90:270:0.5);
\draw (3.6,2.3)node[above left]{$\rho$};
\draw (3.6,1.8)node[below left]{$\tilde\rho$};
\draw (5,1.5)node[right]{$\tilde\sigma'$};
\draw (5,2.5)node[right]{$\sigma$};
\draw (4.5,1)node[below]{$\xi'$};
\draw (4.5,3)node[above]{$\xi$};
\draw (4.5,2)node[right]{$\eta$};
\draw (1,2)node{$\displaystyle\sum_{\eta,\rho}
\sqrt{\frac{\mu(r(\xi))\mu(s(\eta))}
{\mu(s(\xi))\mu(r(\eta))}}$};
\draw (7.5,2)node{$\displaystyle=
\delta_{\xi,\xi'}\delta_{\sigma,\sigma'}
\delta_{r(\xi),s(\sigma)}$};
\end{tikzpicture}
\caption{Bi-unitarity (1)}
\label{biuni1}
\end{center}
\end{figure}

\begin{figure}[H]
\begin{center}
\begin{tikzpicture}
\draw [thick] (3.5,1)--(3.5,1.2);
\draw [thick] (3.5,2)--(3.5,1.8);
\draw [thick] (3,1.5)--(3.2,1.5);
\draw [thick] (4,1.5)--(3.8,1.5);
\draw (3.5,1.5) circle (0.3);
\draw (3.5,1.5)node{$\bar a$};
\draw [thick] (3.5,2)--(3.5,2.2);
\draw [thick] (3.5,3)--(3.5,2.8);
\draw [thick] (3,2.5)--(3.2,2.5);
\draw [thick] (4,2.5)--(3.8,2.5);
\draw (3.5,2.5) circle (0.3);
\draw (3.5,2.5)node{$a$};
\draw [thick] (4.5,2) arc (0:90:0.5);
\draw [thick] (4,1.5) arc (270:360:0.5);
\draw (4.3,2.3)node[above right]{$\sigma$};
\draw (4.3,1.8)node[below right]{$\tilde\sigma$};
\draw (3,1.5)node[left]{$\tilde\rho'$};
\draw (3,2.5)node[left]{$\rho$};
\draw (3.5,1)node[below]{$\xi'$};
\draw (3.5,3)node[above]{$\xi$};
\draw (3.5,2)node[right]{$\eta$};
\draw (0.5,2)node{$\displaystyle\sum_{\eta,\sigma}
\sqrt{\frac{\mu(s(\xi))\mu(r(\eta))}{
\mu(r(\xi))\mu(s(\eta))}}$};
\draw (6.5,2)node{$\displaystyle=
\delta_{\xi,\xi'}\delta_{\rho,\rho'}
\delta_{s(\xi),s(\rho)}$};
\end{tikzpicture}
\caption{Bi-unitarity (2)}
\label{biuni2}
\end{center}
\end{figure}

We can simply represent bi-unitarity in the two identity
as in a diagram in Fig.\ref{biuni3}, where we drop all labels
and the Kronecker $\delta$'s.

\begin{figure}[H]
\begin{center}
\begin{tikzpicture}
\draw [thick] (1.5,1)--(1.5,1.2);
\draw [thick] (1.5,2)--(1.5,1.8);
\draw [thick] (1,1.5)--(1.2,1.5);
\draw [thick] (2,1.5)--(1.8,1.5);
\draw (1.5,1.5) circle (0.3);
\draw [thick] (1.5,2)--(1.5,2.2);
\draw [thick] (1.5,3)--(1.5,2.8);
\draw [thick] (1,2.5)--(1.2,2.5);
\draw [thick] (2,2.5)--(1.8,2.5);
\draw (1.5,2.5) circle (0.3);
\draw [thick] (1,2.5) arc (90:270:0.5);
\draw (2.8,2)node{$=$};
\draw [thick] (4.5,1)--(4.5,1.2);
\draw [thick] (4.5,3)--(4.5,2.8);
\draw [thick] (5,1.5)--(4.8,1.5);
\draw [thick] (5,2.5)--(4.8,2.5);
\draw [thick] (4,1.5)--(4.2,1.5);
\draw [thick] (4,2.5)--(4.2,2.5);
\draw [thick] (4.5,1.8)--(4.5,2.2);
\draw [thick] (4,2.5) arc (90:270:0.5);
\draw [thick] (4.8,2.5) arc (90:180:0.3);
\draw [thick] (4.2,2.5) arc (270:360:0.3);
\draw [thick] (4.5,1.8) arc (180:270:0.3);
\draw [thick] (4.5,1.2) arc (0:90:0.3);
\draw [thick] (6.5,1)--(6.5,1.2);
\draw [thick] (6.5,2)--(6.5,1.8);
\draw [thick] (6,1.5)--(6.2,1.5);
\draw [thick] (7,1.5)--(6.8,1.5);
\draw (6.5,1.5) circle (0.3);
\draw [thick] (6.5,2)--(6.5,2.2);
\draw [thick] (6.5,3)--(6.5,2.8);
\draw [thick] (6,2.5)--(6.2,2.5);
\draw [thick] (7,2.5)--(6.8,2.5);
\draw (6.5,2.5) circle (0.3);
\draw [thick] (7.5,2) arc (0:90:0.5);
\draw [thick] (7,1.5) arc (270:360:0.5);
\draw [thick] (9.5,1)--(9.5,1.2);
\draw [thick] (9.5,2)--(9.5,1.8);
\draw [thick] (9,1.5)--(9.2,1.5);
\draw [thick] (10,1.5)--(9.8,1.5);
\draw [thick] (9.5,2)--(9.5,2.2);
\draw [thick] (9.5,3)--(9.5,2.8);
\draw [thick] (9,2.5)--(9.2,2.5);
\draw [thick] (10,2.5)--(9.8,2.5);
\draw [thick] (10.5,2) arc (0:90:0.5);
\draw [thick] (10,1.5) arc (270:360:0.5);
\draw [thick] (9.5,2.8) arc (180:270:0.3);
\draw [thick] (9.5,2.2) arc (0:90:0.3);
\draw [thick] (9.2,1.5) arc (270:360:0.3);
\draw [thick] (9.8,1.5) arc (90:180:0.3);
\draw (5.4,1.6)node{$,$};
\draw (8.2,2)node{$=$};
\end{tikzpicture}
\caption{Graphical representation of bi-unitarity in a simplified form}
\label{biuni3}
\end{center}
\end{figure}

\section{Flatness and the zipper condition}

We start with a connection $W_a$ and the corresponding
$4$-tensor $a$ as in Fig.~\ref{tensor}.  We work on
$2$-tensors $F,F'$ of the types as in Fig.~\ref{twotensor},
where the index set for $\rho_1,\rho_2$ is the same as the
one for $\rho$ in $a$, and the one for $\sigma_1,\sigma_2$
is the same as the one for $\sigma$ in $a$.

\begin{figure}[H]
\begin{center}
\begin{tikzpicture}
\draw [thick] (3.3,1.2)--(3.7,1.2);
\draw [thick] (3.3,1.8)--(3.7,1.8);
\draw [thick] (3.3,1.2)--(3.3,1.8);
\draw [thick] (3.7,1.2)--(3.7,1.8);
\draw [thick] (3,1.5)--(3.3,1.5);
\draw [thick] (4,1.5)--(3.7,1.5);
\draw (3.5,1.5)node{$F$};
\draw (3,1.5)node[above]{$\rho_2$};
\draw (4,1.5)node[above]{$\rho_1$};
\draw [thick] (5.3,1.2)--(5.7,1.2);
\draw [thick] (5.3,1.8)--(5.7,1.8);
\draw [thick] (5.3,1.2)--(5.3,1.8);
\draw [thick] (5.7,1.2)--(5.7,1.8);
\draw [thick] (5,1.5)--(5.3,1.5);
\draw [thick] (6,1.5)--(5.7,1.5);
\draw (5.5,1.5)node{$F'$};
\draw (5,1.5)node[above]{$\sigma_2$};
\draw (6,1.5)node[above]{$\sigma_1$};
\end{tikzpicture}
\caption{The $2$-tensors $F$  and $F'$}
\label{twotensor}
\end{center}
\end{figure}

Suppose we have a field of strings $f$ which we do not assume
to be flat yet.  Then we define a $2$-tensor $F$ from $f$ as 
in Fig.~\ref{fieldf}.

\begin{figure}[H]
\begin{center}
\begin{tikzpicture}
\draw [thick] (3.3,1.2)--(3.7,1.2);
\draw [thick] (3.3,1.8)--(3.7,1.8);
\draw [thick] (3.3,1.2)--(3.3,1.8);
\draw [thick] (3.7,1.2)--(3.7,1.8);
\draw [thick] (3,1.5)--(3.3,1.5);
\draw [thick] (4,1.5)--(3.7,1.5);
\draw (3.5,1.5)node{$F$};
\draw (3,1.5)node[above]{$\rho_2$};
\draw (4,1.5)node[above]{$\rho_1$};
\draw (5.8,1.5)node{$=\displaystyle
\frac{\mu(r(\rho_1))}{\mu(s(\rho_1))}
f_{\rho_1,\rho_2}$};
\end{tikzpicture}
\caption{The $2$-tensor $F$ arising from 
the field $\displaystyle\sum_{\rho_1,\rho_2}
f_{\rho_1,\rho_2}(\rho_1,\rho_2)$ of strings}
\label{fieldf}
\end{center}
\end{figure}

We now have the following main Theorem.

\begin{theorem}\label{zipper}
The following are equivalent for a $2$-tensor $F$ and 
the corresponding field $f$ of strings defined as above.

\textnormal{(1) [The half zipper condition]}
There exists another $2$ tensor $\tilde F$ so that
the $2$-tensors $F, \tilde F$ satisfy the intertwining property 
as in Fig.~\ref{intertwine1}.

\textnormal{(2) [The zipper condition]}
The $2$-tensors $F$ satisfies the invariance property 
as in Fig.~\ref{intertwine2}.

\textnormal{(3) [Half flatness]}
There exists another field $\tilde f$ of strings so that we have
the half flatness as in Fig.~\ref{halfflatf}.

\textnormal{(4) [Flatness]}
The field $f$ of strings satisfies
the flatness as in Fig.~\ref{flatf}.
\end{theorem}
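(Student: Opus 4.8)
The plan is to prove the four statements mutually equivalent by separating the two ``languages'' in which they are phrased: conditions (1) and (2) concern the $2$-tensors $F,\tilde F$ through the $4$-tensor calculus of Fig.~\ref{tensor} and \ref{tensor2}, whereas (3) and (4) concern the associated field $f$ of strings acting on the open string bimodule $X^W$ as in the Proposition of Section~2. First I would establish the two ``horizontal'' biconditionals (1)$\Leftrightarrow$(3) and (2)$\Leftrightarrow$(4) by direct translation. Inserting the fourth-root normalization factor $\sqrt[4]{\mu(s(\xi))\mu(r(\eta))/\mu(r(\xi))\mu(s(\eta))}$ that converts the tensor $a$ into the connection $W_a$, the intertwining identity of (1) becomes, factor for factor, the half-flatness identity of (3), and similarly the invariance of (2) matches the flatness of (4). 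This step is a bookkeeping computation: one checks that the $\mu$-weights appearing on the two sides agree, using the Perron--Frobenius identities recorded for $\beta_0,\beta_1$. With these two biconditionals in hand it suffices to prove a single ``vertical'' equivalence, since (1)$\Leftrightarrow$(3), (2)$\Leftrightarrow$(4) together with (1)$\Leftrightarrow$(2) force all four to coincide.

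I would therefore prove (1)$\Leftrightarrow$(2). The easy direction is (2)$\Rightarrow$(1): assuming $F$ satisfies the closed invariance, define $\tilde F$ to be the one-step transport of $F$ through the connection, obtained by contracting $F$ with $W_a$ along a single edge, and verify the half zipper identity by combining the invariance with the unitarity relation in Fig.~\ref{biuni1}. The substantive direction is (1)$\Rightarrow$(2): here we are handed only the \emph{existence} of some auxiliary $\tilde F$ satisfying the one-step intertwining relation, and we must eliminate it to recover the intrinsic invariance of $F$. The mechanism is to stack the intertwining relation with its conjugate and contract the two $\tilde F$-legs using bi-unitarity, Fig.~\ref{biuni1} and \ref{biuni2}; each contraction caps off one copy of $\tilde F$, and the fourth-root factors recombine to reproduce exactly the normalization demanded in (2).

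The main obstacle I anticipate is this closing step in (1)$\Rightarrow$(2). The natural worry is that zipping the auxiliary tensor $\tilde F$ out should require the connection $W$ to be flat, which we are explicitly forbidden to assume. The key point that makes the argument go through is that each cancellation uses only a single application of the \emph{local} unitarity relations depicted in Fig.~\ref{biuni3}, so the contraction is purely algebraic and never appeals to the global commuting-square (flatness) or finite-depth hypotheses on $W$. Once the local identity (2) is established, its interpretation as a self-intertwiner on the entire bimodule $X^W$ is supplied by the well-definedness argument preceding the Proposition, which inducts on the horizontal length $2k$ and uses the parallel transport of \cite[Definition~11.18]{EK1} to pass from $2k$ to $2k+1$; that induction is precisely where the flatness condition (4) gets promoted to a statement valid on all string lengths, and tracking the $\mu$-normalizations through it is the one place where genuine care is needed.
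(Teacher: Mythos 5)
Your global strategy is genuinely different from the paper's, and in outline it is viable. The paper's substantive step is the equivalence $(3)\Leftrightarrow(4)$, proved in the string-algebra picture: half flatness says $f$ commutes with $A_{01}$; since $f\in A_{10}$ automatically commutes with the first horizontal Jones projection $e_1$, and $A_{02}$ is generated by $A_{01}$ and $e_1$, the element $f$ commutes with $A_{02}$, and the $z=z'$ argument of \cite[Fig.~11.16]{EK1} identifies the transported field with $f$ itself; the tensor statements $(1)$, $(2)$ are then attached to $(3)$, $(4)$ by the normalization-tracking translations (your first step, which matches the paper's, except that the paper's translation also uses the multiplier of Fig.~\ref{multi} and bi-unitarity, not only weight bookkeeping). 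You instead propose to prove the vertical equivalence $(1)\Leftrightarrow(2)$ directly at the tensor level. A correct argument of this kind does exist, is purely finite dimensional, avoids the Jones projection and the appeal to \cite{EK1} entirely, and thus makes the paper's point (no flatness, no finite depth needed) even more transparent.

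However, the mechanism you describe for the crucial direction $(1)\Rightarrow(2)$ has a gap. Stacking the intertwining relation with its entrywise conjugate and ``contracting the two $\tilde F$-legs'' does not cap $\tilde F$ off: that contraction produces the combination $\tilde F\tilde F^*$, which is not the identity since $F,\tilde F$ are not assumed unitary, so $\tilde F$ survives; bi-unitarity applies to $a$, not to $\tilde F$. What does work is the following. In block-matrix form the half zipper condition reads $U(F^{T}\otimes 1)=(1\otimes\tilde F^{T})U$, where $U$ is the unitary built from the blocks of $W_a$ (the scalar factors relating $a$ to $W_a$ cancel blockwise). Unitarity gives $F^{T}\otimes 1=U^*(1\otimes\tilde F^{T})U$, and taking adjoints of \emph{both sides} --- i.e., using that conjugation by a unitary is a $*$-map --- yields the half zipper condition for the pair $(F^*,\tilde F^*)$. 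Since $a'$ is, by the paper's renormalization conventions, the $\mu$-renormalized adjoint of $a$, and since $2$-tensors coming from fields of strings commute with the vertex-weight diagonal matrices, this adjoint relation is precisely $\tilde F a'=a'F$; composing with $(1)$ then gives $Faa'=a\tilde Fa'=aa'F$, which is $(2)$. This adjoint step cannot be skipped: it is the finite-dimensional avatar of the fact that the commutant of a $*$-algebra is $*$-closed, which is also what the paper's Jones-projection argument secretly exploits. Separately, your final paragraph is off target: conditions $(3)$ and $(4)$ are the finite identities of Fig.~\ref{halfflatf} and Fig.~\ref{flatf}, not statements about the bimodule $X^W$, so the induction on string length from the Proposition in Section~2 plays no role in the theorem.
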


\begin{proof}

We first show equivalence of (3) and (4)
Recall that this has been essentially
proved in \cite[pages 563--564]{EK1}, but we give more
details in the current context.
Using the initial connection $W_a$, we apply the
string algebra construction in \cite[Section 11.3]{EK1},
but allow all vertices in $V_0$ to be starting vertices.
We then have a double sequence $\{A_{jk}\}_{j,k=1,2,\dots}$
of finite dimensional $C^*$-algebras and $A_{00}$
is an abelian algebra $\mathbb{C}^{|V_0|}$, where
$|V_0|$ denotes the cardinality of $V_0$.  

We assume (3).  The fields of strings $f$ and $\tilde f$
give the corresponding same elements in the algebras $A_{10}$
and $A_{11}$.  We use the symbol $f$ for this.  
Half flatness implies $f$ commutes with $A_{01}$.
The first horizontal Jones projection $e_1$ commutes with
$A_{10}$, so it commutes with $f$, in particular.  This
means $f$ commutes with $A_{02}$ which is generated by
$A_{01}$ and $e_1$.  This shows $f$ produces another
field of strings $\bar f$ on $\mathcal{G}_1$.  The
argument for $z=z'$ in \cite[Fig.~11.16]{EK1} shows
that the field of strings $\bar f$ is equal to the
field of strings $f$.  This implies (4).

Conversely, we assume (4).  In the same way as the
above argument, we construct string algebras 
$\{A_{jk}\}_{j,k=1,2,\dots}$.  The flatness of
$f$ shows that $f$ gives an element in $A_{10}$ which
commutes with $A_{02}$.  In particular, it commutes with
$A_{01}$, and produces a field $\tilde f$ of strings
satisfying the half flatness condition.

We next prove that (3) implies (1).
We first assume half flatness for $f$ and $\tilde f$.
We define a $2$-tensor $\tilde F$ from $\tilde f$ in
a similar way to the definition of $F$.

\begin{figure}[H]
\begin{center}
\begin{tikzpicture}
\draw [thick, ->] (3,1)--(4,1);
\draw [thick, ->] (3,2)--(4,2);
\draw [thick, ->] (3,3)--(4,3);
\draw [thick, ->] (3,1)--(3,2);
\draw [thick, ->] (3,3)--(3,2);
\draw [thick, ->] (4,1)--(4,2);
\draw [thick, ->] (4,3)--(4,2);
\draw (3.5,2.5)node{$W_a$};
\draw (3.5,1.5)node{$W_{\bar a}$};
\draw (3.5,1)node[below]{$\xi'$};
\draw (3.5,3)node[above]{$\xi$};
\draw (3,1.5)node[left]{$\rho_2$};
\draw (3,2.5)node[left]{$\rho_1$};
\draw (4,1.5)node[right]{$\sigma_2$};
\draw (4,2.5)node[right]{$\sigma_1$};
\draw (1.8,1.8)node{$\displaystyle\sum_{\rho_1,\rho_2}f_{\rho_1,\rho_2}$};
\draw (5.5,2)node{$=\delta_{\xi,\xi'}\tilde f_{\sigma_1,\sigma_2}$};
\end{tikzpicture}
\caption{Half flatness of $f$}
\label{halfflatf}
\end{center}
\end{figure}

\begin{figure}[H]
\begin{center}
\begin{tikzpicture}
\draw [thick] (8,1)--(8,1.2);
\draw [thick] (8,2)--(8,1.8);
\draw [thick] (6.8,1.5)--(7.7,1.5);
\draw [thick] (8.5,1.5)--(8.3,1.5);
\draw (8,1.5) circle (0.3);
\draw (8,1.5)node{$\bar a$};
\draw [thick] (8,2)--(8,2.2);
\draw [thick] (8,3)--(8,2.8);
\draw [thick] (7.2,2.5)--(7.7,2.5);
\draw [thick] (8.5,2.5)--(8.3,2.5);
\draw (8,2.5) circle (0.3);
\draw (8,2.5)node{$a$};
\draw [thick] (6.8,2.2)--(7.2,2.2);
\draw [thick] (6.8,2.8)--(7.2,2.8);
\draw [thick] (6.8,2.2)--(6.8,2.8);
\draw [thick] (7.2,2.2)--(7.2,2.8);
\draw (7,2.5)node{$F$};
\draw [thick] (6.8,2.5) arc (90:270:0.5);
\draw (7.5,2.5)node[above]{$\rho_1$};
\draw (7.5,1.5)node[below]{$\tilde\rho_2$};
\draw (8.5,1.5)node[right]{$\tilde\sigma_2$};
\draw (8.5,2.5)node[right]{$\sigma_1$};
\draw (8,1)node[below]{$\xi'$};
\draw (8,3)node[above]{$\xi$};
\draw (8,2)node[right]{$\eta$};
\draw (3.1,1.9)node{$\displaystyle\sum_{\eta,\rho_1,\rho_2}
\sqrt{\frac{\mu(s(\xi))\sqrt{\mu(r(\xi))\mu(r(\xi'))}}
{\mu(s(\eta))\mu(r(\eta))}}$};
\draw [thick] (12.8,1.7)--(13.2,1.7);
\draw [thick] (12.8,2.3)--(13.2,2.3);
\draw [thick] (12.8,1.7)--(12.8,2.3);
\draw [thick] (13.2,1.7)--(13.2,2.3);
\draw [thick] (12.8,2)--(12.3,2);
\draw [thick] (13.2,2)--(13.7,2);
\draw (13,2)node{$\tilde F$};
\draw (13.55,2)node[above]{$\sigma_1$};
\draw (12.45,2)node[above]{$\sigma_2$};
\draw (10.6,2)node{$=\delta_{\xi,\xi'}\displaystyle
\frac{\mu(s(\sigma_1))}{\mu(r(\sigma_1))}$};
\end{tikzpicture}
\caption{Half flatness for $F$}
\label{halfflatff}
\end{center}
\end{figure}

We rewrite the coefficient within the summation on the
left-hand side as
\[
\frac{\mu(s(\xi))}{\mu(s(\eta))}
\sqrt{\frac{\sqrt{\mu(r(\xi))\mu(r(\xi'))}\mu(s(\eta))}
{\mu(s(\xi))\mu(r(\eta))}},
\]
and multiply the number in Fig.~\ref{multi} to the numbers on
the both hands sides of Fig.~\ref{halfflatff} and sum them
over $\xi',\sigma_2$.  

\begin{figure}[H]
\begin{center}
\begin{tikzpicture}
\draw [thick] (3,1)--(3,1.2);
\draw [thick] (3,2)--(3,1.8);
\draw [thick] (2.5,1.5)--(2.7,1.5);
\draw [thick] (3.5,1.5)--(3.3,1.5);
\draw (3,1.5) circle (0.3);
\draw (3,1.5)node{$a$};
\draw (2.5,1.5)node[left]{$\rho_3$};
\draw (3.5,1.5)node[right]{$\sigma_2$};
\draw (3,1)node[below]{$\eta'$};
\draw (3,2)node[above]{$\xi'$};
\draw (0.2,1.5)node{$\displaystyle\sqrt[4]{\frac{\mu(r(\xi'))\mu(s(\eta'))}
{\mu(s(\xi'))\mu(r(\eta'))}}$};
\end{tikzpicture}
\caption{The multiplier}
\label{multi}
\end{center}
\end{figure}

Then by bi-unitarity (2), only the terms for
$\eta=\eta'$ and $\rho_2=\rho_3$ remain, and
we have the identity as in
Fig.~\ref{intertwine1} by dividing the both
hand sides by 
\[
\frac{\mu(r(\sigma_1))}{\mu(s(\sigma_1))}
\displaystyle\sqrt[4]{\frac{\mu(r(\xi))\mu(s(\eta'))}
{\mu(s(\xi))\mu(r(\eta'))}}.
\]
Note that only the term $\xi=\xi'$ remains on
the right hand side due to $\delta_{\xi,\xi'}$.
This proves (1).

Since the above graphical manipulation amounts to
a multiplication of a unitary matrix, we also have
the converse direction.  That is, 
we know that (1) implies (3).

A similar argument to the proof of equivalence of (1)
and (3) shows equivalence of (2) and (4).

\begin{figure}[H]
\begin{center}
\begin{tikzpicture}
\draw [thick] (2.5,1.5)--(2.5,1.7);
\draw [thick] (2.5,2.5)--(2.5,2.3);
\draw [thick] (1.7,2)--(2.2,2);
\draw [thick] (3,2)--(2.8,2);
\draw (2.5,2) circle (0.3);
\draw (2.5,2)node{$a$};
\draw [thick] (1.3,1.7)--(1.7,1.7);
\draw [thick] (1.3,2.3)--(1.7,2.3);
\draw [thick] (1.3,1.7)--(1.3,2.3);
\draw [thick] (1.7,1.7)--(1.7,2.3);
\draw [thick] (1.3,2)--(1,2);
\draw (1.5,2)node{$F$};
\draw (1,2)node[left]{$\rho_3$};
\draw (3,2)node[right]{$\sigma_1$};
\draw (2.5,2.5)node[above]{$\xi$};
\draw (2.5,1.5)node[below]{$\eta'$};
\draw (4,2)node{$=$};
\draw [thick] (5.5,1.5)--(5.5,1.7);
\draw [thick] (5.5,2.5)--(5.5,2.3);
\draw [thick] (5,2)--(5.2,2);
\draw [thick] (6.3,2)--(5.8,2);
\draw (5.5,2) circle (0.3);
\draw (5.5,2)node{$a$};
\draw [thick] (6.3,1.7)--(6.7,1.7);
\draw [thick] (6.3,2.3)--(6.7,2.3);
\draw [thick] (6.3,1.7)--(6.3,2.3);
\draw [thick] (6.7,1.7)--(6.7,2.3);
\draw [thick] (6.7,2)--(7,2);
\draw (6.5,2)node{$\tilde F$};
\draw (5,2)node[left]{$\rho_3$};
\draw (7,2)node[right]{$\sigma_1$};
\draw (5.5,2.5)node[above]{$\xi$};
\draw (5.5,1.5)node[below]{$\eta'$};
\end{tikzpicture}
\caption{Intertwining property for $F, \tilde F$}
\label{intertwine1}
\end{center}
\end{figure}

\begin{figure}[H]
\begin{center}
\begin{tikzpicture}
\draw [thick] (2.5,1.5)--(2.5,1.7);
\draw [thick] (2.5,2.5)--(2.5,2.3);
\draw [thick] (3.5,1.5)--(3.5,1.7);
\draw [thick] (3.5,2.5)--(3.5,2.3);
\draw [thick] (1.7,2)--(2.2,2);
\draw [thick] (2.8,2)--(3.2,2);
\draw [thick] (3.8,2)--(4,2);
\draw (2.5,2) circle (0.3);
\draw (2.5,2)node{$a$};
\draw (3.5,2) circle (0.3);
\draw (3.5,2.07)node{$a'$};
\draw [thick] (1.3,1.7)--(1.7,1.7);
\draw [thick] (1.3,2.3)--(1.7,2.3);
\draw [thick] (1.3,1.7)--(1.3,2.3);
\draw [thick] (1.7,1.7)--(1.7,2.3);
\draw [thick] (1.3,2)--(1,2);
\draw (1.5,2)node{$F$};
\draw (1,2)node[left]{$\rho_1$};
\draw (4,2)node[right]{$\rho_2$};
\draw (2.5,2.5)node[above]{$\xi_1$};
\draw (3.5,2.5)node[above]{$\tilde\xi_2$};
\draw (2.5,1.5)node[below]{$\eta_1$};
\draw (3.5,1.5)node[below]{$\tilde\eta_2$};
\draw (5,2)node{$=$};
\draw [thick] (6.5,1.5)--(6.5,1.7);
\draw [thick] (6.5,2.5)--(6.5,2.3);
\draw [thick] (7.5,1.5)--(7.5,1.7);
\draw [thick] (7.5,2.5)--(7.5,2.3);
\draw [thick] (6,2)--(6.2,2);
\draw [thick] (6.8,2)--(7.2,2);
\draw [thick] (7.8,2)--(8.3,2);
\draw [thick] (8.7,2)--(9,2);
\draw (6.5,2) circle (0.3);
\draw (7.5,2) circle (0.3);
\draw (6.5,2)node{$a$};
\draw (7.5,2.07)node{$a'$};
\draw [thick] (8.3,1.7)--(8.7,1.7);
\draw [thick] (8.3,2.3)--(8.7,2.3);
\draw [thick] (8.3,1.7)--(8.3,2.3);
\draw [thick] (8.7,1.7)--(8.7,2.3);
\draw (8.5,2)node{$F$};
\draw (6,2)node[left]{$\rho_1$};
\draw (9,2)node[right]{$\rho_2$};
\draw (6.5,2.5)node[above]{$\xi_1$};
\draw (6.5,1.5)node[below]{$\eta_1$};
\draw (7.5,2.5)node[above]{$\tilde\xi_2$};
\draw (7.5,1.5)node[below]{$\tilde\eta_2$};
\end{tikzpicture}
\caption{Intertwining property for $F$}
\label{intertwine2}
\end{center}
\end{figure}

\end{proof}

\section*{Acknowledgements}

This work was supported by 
Japan Science and Technology Agency (JST) as 
CREST program JPMJCR18T6, and Grants-in-Aid 
for Scientific Research 24K21514 and 25H00590.
A part of this work was done in Simons Laufer Mathematical Sciences
Institute in Berkeley, Leipzig University and
Fordham University.  The author acknowledges supports of
the National Science Foundation under Grant No.~DMS-1928930 and 
U.S. Army Research Office through contract W911NF-25-1-0050.

\end{document}